\documentclass[11pt]{amsart}
 
\overfullrule0pt

\textwidth 16cm                 %       bibtex moore
\textheight 22cm                %       latex  moore
\evensidemargin 6mm             %       latex  moore
\oddsidemargin 6mm              %  If the theorems are changed
\topmargin 5mm                  %       latex  moore
\setlength{\parskip}{1.5ex}     %  should suffice.

%%%%%%%%%%%%%%%%%%%% Text italic %%%%%%%%%%%%%%%%%%%%%%%%%%%%
\theoremstyle{plain}
\newtheorem{thm}{Theorem}[section]
\newtheorem{theorem}[thm]{Theorem}

\newtheorem{lemma}[thm]{Lemma}
\newtheorem{corollary}[thm]{Corollary}

%%%%%%%%%%%%%%%%%%%% Text roman %%%%%%%%%%%%%%%%%%%%%%%%%%%%%
\theoremstyle{definition}

\newtheorem{definition}[thm]{Definition}

\numberwithin{equation}{section}
%%%%%%%% Special symbols %%%%%%%%%%%%%%%%%%%%%%%%%%%%%%%
% Skriptbuchstaben

% Sonderbuchstaben mit Doppellinie

%Sonstiges

%\newcommand{\rounddown}[1]{\llcorner{#1}\lrcorner}
%\def\bigtimes{\mathop{\mbox{\Huge$\times$}}}
%%%%%%%%%%%%%%%%%%%%%%%%%%%%%%%%%%%%%%%%%%%%%%%%%%%%%%%%%%%%%%

\title {Time and Space separation in General Relativity}
 \author{Tuyen Trung Truong}
        \email{truongt@indiana.edu}
 \begin{document}

\maketitle

\begin{abstract}
Let $(M,g)$ be a spacetime. That is, $M$ is a real manifold of dimension $4$ equipped with a Lorentzian metric $g$. We show that any separation of time and space in $M$ is equivalent to introducing a (non-smooth) Riemann metric $h$. If $h$ is smooth, it induces a smooth line bundle $T_p\rightarrow M$, whose any fiber is generated by a time-like vector, called the time bundle. Whether $(M,g,h)$ is time orientable or not corresponds to whether this line bundle is trivial or not. As well-known, the last condition is characterized by the first Stiefel-Whitney class $w^1(T_p)\in H^1(M,\mathbb{Z}/2)$. We then define a partial time orientation of $M$ as a section of the line bundle $T\rightarrow M$. As applications, we discuss time and space differentiations on $M$.
\end{abstract}
\section{Introduction}
The relation between space and time has been discussed since the early time of human history, and in many areas: science, philosophy, literature,... For references, see for  example \cite{hawking, penrose, smolin, wald}.

In Newtonian mechanics, space and time are absolute notions and are separable. The relativity theory of Einstein revolutionized our understanding of space and time. In special relativity, Einstein showed that time and space are interrelated. However, we can still consider the time and space coordinates separately. In general relativity, when gravity is incorporated, then space and time coordinates have the same roles and we can not tell them apart. 

It is, however, for applications, that we want to redo the process above. That is, we want to, given a spacetime in general relativity, separate the space and time coordinates. The common approaches on this are the $3+1$ and $1+3$ separations, these corresponding to slicing the spacetime by hypersurfaces or by curves (see \cite{gourgoulhon} and references therein). Most recently, Smolin \cite{smolin} has a completely new perception on the nature of time.     

Our approach in this paper is different from the above. The main object of our paper is to give a simple separation of time and space, by employing a Riemannian metric. Given a Riemannian metric, we produce a smooth real line bundle $T\rightarrow M$, called the time bundle. This time bundle is the separation of time and space we seek for. Then we define a partial orientation of time as a smooth section of this time bundle. The question of whether $M$ is time orientable is reduced to whether whether $T$ is a trivial bundle, or equivalently, whether the first Stiefel-Whitney class of the time bundle is zero. We then gives applications of this  separation to time and space differentiations.     

\section{Main results} 

Let $(M,g)$ be a spacetime. That is, $M$ is a real manifold of dimension $4$ equipped with a Lorentzian metric $g$. Here we use the convention that the metric has signature $(-1,1,1,1)$. For each $p\in M$, we let $V_p$ denote the tangent space of $M$ at $p$. Then, (see \cite{wald}), the null cone $N_p$ and the light cone $L_p$ are defined as follows:
\begin{eqnarray*}
N_p&=&\{v\in V_p:~g_p(v,v)=0\},\\
L_p&=&\{v\in V_p:~g_p(v,v)\geq 0\}. 
\end{eqnarray*}
A vector $v\in V_p$ is called time-like if $g_p(v,v)<0$, null if $g_p(v,v)=0$, and space-like if $g_p(v,v)>0$. For each $v\in V_p$, we define $v^{\perp}=\{w\in V_p:~g(v,w)=0\}$.

The following simple result follows from Sylvester's law
\begin{lemma}
If $v\in V_p$ is time-like, then the restriction of $g_p$ to $v^{\perp}$ is positive definite. 
\label{LemmaOrthogonalSpace}\end{lemma}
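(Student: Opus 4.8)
The plan is to reduce the statement to the additivity of the index of inertia under orthogonal direct sums, which is the content of Sylvester's law. First I would record that since $v$ is time-like we have $g_p(v,v)<0$, so in particular $g_p(v,v)\neq 0$ and hence $v\notin v^{\perp}$. The linear functional $w\mapsto g_p(v,w)$ on $V_p$ is then nonzero (it does not vanish at $w=v$), so its kernel $v^{\perp}$ has dimension $3$. Combined with $\R v\cap v^{\perp}=\{0\}$ (any element of the intersection is a multiple $\lambda v$ with $\lambda\, g_p(v,v)=0$, forcing $\lambda=0$ since $g_p(v,v)\neq0$), a dimension count gives the orthogonal direct sum decomposition $V_p=\R v\oplus v^{\perp}$.

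Next I would invoke Sylvester's law of inertia. Because the decomposition $V_p=\R v\oplus v^{\perp}$ is orthogonal for $g_p$, the signature of $g_p$ is the sum of the signatures of its restrictions to the two summands. The metric $g_p$ has signature $(-1,1,1,1)$, i.e. exactly one negative and three positive diagonal entries in a suitable basis. The restriction of $g_p$ to the line $\R v$ is negative definite and so contributes the single negative sign. Therefore the restriction of $g_p$ to the $3$-dimensional space $v^{\perp}$ must account for all three positive signs, and is consequently positive definite, which is the assertion of the lemma.

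The only point that needs care is the first step: verifying that $v^{\perp}$ is genuinely a complement of $\R v$ and that the decomposition is $g_p$-orthogonal, since Sylvester's additivity applies only to orthogonal direct sums; I expect no real obstacle here given the non-degeneracy of $g_p$ and $g_p(v,v)\neq0$. As an alternative to the signature argument, one can argue by direct computation: fix a $g_p$-orthonormal frame $e_0,\dots,e_3$ with $g_p(e_0,e_0)=-1$ and $g_p(e_i,e_i)=1$, write $v=v^0e_0+\sum_i v^ie_i$ and $w=w^0e_0+\sum_i w^ie_i$ for $w\in v^{\perp}$; time-likeness gives $(v^0)^2>\sum_i (v^i)^2$, the orthogonality relation gives $v^0w^0=\sum_i v^iw^i$, and the Cauchy--Schwarz inequality then yields $(w^0)^2<\sum_i (w^i)^2$ for $w\neq0$, i.e. $g_p(w,w)>0$. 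I would nonetheless present the signature argument as the main proof, since the lemma is explicitly attributed to Sylvester's law.
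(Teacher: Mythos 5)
Your proof is correct and follows exactly the route the paper intends: the paper gives no written-out argument, stating only that the lemma ``follows from Sylvester's law,'' and your main argument (the $g_p$-orthogonal decomposition $V_p=\R v\oplus v^{\perp}$ plus additivity of the inertia indices) is a careful and accurate elaboration of precisely that remark. The alternative Cauchy--Schwarz computation you sketch is also sound, but it is an extra; the signature argument is the one matching the paper.
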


In general relativity, time and space can not be separated.  In the remaining of this section, we show that to separate them is equivalent to introducing a (non-smooth) Riemann metric $h$. By separation of space and time, we mean that for each $p\in M$ there is given a time line $T_p$. Here is the precise definition. 
\begin{definition}
A separation of time and space is a (non-smooth) assignment $p\mapsto v_p\in V_p$, where $v_p$ is a time-like vector.  
\label{DefinitionSeparationTimeSpace}\end{definition}

Any separation of time and space is equivalent to introducing a (non-smooth) Riemannian metric, as shown in the below. 
\begin{theorem}
Any separation of time and space induces a (non-smooth) Riemannian metric $h_p$ on $V_p$, for any $p$.

Conversely, given a (non-smooth) Riemannian metric $h_p$ on $V_p$, then we have a separation of time and space.  
\label{TheoremSeparation}\end{theorem}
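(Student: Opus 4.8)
The plan is to treat the two directions separately, both as pointwise statements in the vector space $V_p$, since the assignment is allowed to be non-smooth and there is no smoothness or continuity to track across points.

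For the forward direction, I would start from a time-like vector $v_p$ and invoke Lemma \ref{LemmaOrthogonalSpace}. Since $g_p(v_p,v_p)<0$, the vector $v_p$ does not lie in $v_p^\perp$, and a dimension count gives the direct sum decomposition $V_p = \R v_p \oplus v_p^\perp$. On $v_p^\perp$ the form $g_p$ is already positive definite by the Lemma, so the only obstruction to positivity is the sign of $g_p$ in the $v_p$ direction. I would therefore define $h_p$ by declaring the above decomposition to be $h_p$-orthogonal, setting $h_p=-g_p$ on $\R v_p$ and $h_p=g_p$ on $v_p^\perp$. Explicitly, writing $w=a v_p+u$ and $w'=a' v_p+u'$ with $u,u'\in v_p^\perp$, set $h_p(w,w')=-a a'\, g_p(v_p,v_p)+g_p(u,u')$. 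This is manifestly symmetric and bilinear, and $h_p(w,w)=-a^2 g_p(v_p,v_p)+g_p(u,u)$ is a sum of two nonnegative terms vanishing only when $a=0$ and $u=0$; hence $h_p$ is positive definite, i.e.\ a Riemannian metric.

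For the converse, I would compare $g_p$ with the given Riemannian metric $h_p$ via the $h_p$-self-adjoint operator $A_p\colon V_p\to V_p$ characterized by $g_p(w,w')=h_p(A_p w,w')$ for all $w'$; this $A_p$ exists and is unique because $h_p$ is nondegenerate, and it is $h_p$-self-adjoint because $g_p$ is symmetric. The spectral theorem then furnishes an $h_p$-orthonormal eigenbasis with real eigenvalues. By Sylvester's law of inertia, the number of negative (resp.\ positive) eigenvalues of $A_p$ equals the number of $-1$'s (resp.\ $+1$'s) in the signature of $g_p$, namely one negative and three positive. I would let $v_p$ be any eigenvector for the unique negative eigenvalue $\lambda$; then $g_p(v_p,v_p)=\lambda\, h_p(v_p,v_p)<0$, so $v_p$ is time-like and yields the desired separation.

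The main obstacle is the sign-counting in the converse: one must argue that the signature of $g_p$ forces exactly one negative eigenvalue of $A_p$, which is precisely the content of Sylvester's law applied to the pair $(g_p,h_p)$. The remaining computations in both directions are routine. I would also remark that the converse produces a canonical one-dimensional eigenspace, the time line $T_p$, which is the geometric object the construction is really after, with the choice of $v_p$ within it being the non-smooth part of the data.
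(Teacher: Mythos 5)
Your proposal is correct and follows essentially the same route as the paper: the forward direction uses the decomposition $V_p=\mathbb{R}v_p\oplus v_p^{\perp}$ together with Lemma \ref{LemmaOrthogonalSpace} to flip the sign on the $v_p$-line, and the converse diagonalizes the $h_p$-self-adjoint operator representing $g_p$ and takes an eigenvector for the unique negative eigenvalue. Your write-up is somewhat more explicit (full bilinear formula, positive-definiteness check, and the Sylvester count of negative eigenvalues, which the paper only asserts), but the argument is the same.
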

\begin{proof}
First, assume that we have a separation of time and space. We now construct a Riemannian metric $h_p$. Fix $p\in M$. Since $v_p$ is time-like, by Lemma \ref{LemmaOrthogonalSpace} the restriction of $g$ to the orthogonal space $v_p^{\perp}$ is positively definite. Any $v\in V_p$ can be written uniquely as $v=\lambda v_p+w$, where $\lambda \in \mathbb{R}$ and $w\in v_p^{\perp}$. Then, we define
\begin{eqnarray*}
h_p(v,v)=-\lambda ^2g_p(v_p,v_p)+g_p(w,w).
\end{eqnarray*}

Last, assume that we have a Riemannian metric $h_p$ on $V_p$. Then $h_p$ gives an identification of $V_p$ with $V_p^*$, under which $g_p$ defines a linear map $V_p\rightarrow V_p$. More precisely, for any $v\in V_p$, then $g_p(v)\in V_p$ is such that  for any $w\in V_p$
\begin{eqnarray*}
h_p(g_p(v),w)=g_p(v,w).
\end{eqnarray*}

Since both $h_p$ and $g_p$ are metrics (thus symmetric), the linear map $g_p:V_p\rightarrow V_p$ is symmetric. Therefore it is diagonalizable. Since the map $g_p$ has only one negative eigenvalue $\lambda _p$, there is then only one (upto multiplicative constants) non-zero eigenvector $v_p$ with eigenvalue $\lambda _p$. This vector $v_p$ is time-like since by definition
\begin{eqnarray*}
g_p(v_p,v_p)=h_p(g_p(v_p),v_p)=h_p(\lambda _pv_p,v_p)=\lambda _ph_p(v_p,v_p)<0.
\end{eqnarray*}
\end{proof}
   
On any smooth real manifold $M$, there exist smooth Riemannian metrics. In this case, we have the following result. 
\begin{theorem}
Let $h$ be a smooth Riemanninan metric on $(M,g)$. Then there is associated a smooth line bundle $T\rightarrow M$, such that for any $p\in M$ the fiber $T_p$ is generated by a time-like vector $v_p$. 
\label{TheoremSmoothSeparation}\end{theorem}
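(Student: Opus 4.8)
The plan is to realize $T_p$ as the negative eigenline of the self-adjoint endomorphism attached to the pair $(g_p,h_p)$, and to show that these lines fit together smoothly by producing a smooth field of projections onto them.

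First I would globalize the pointwise construction from the proof of Theorem \ref{TheoremSeparation}. Using $h$ to identify $TM$ with $T^*M$, the Lorentzian metric $g$ determines a bundle endomorphism $A\colon TM\to TM$ characterized by $h_p(A_pv,w)=g_p(v,w)$ for all $v,w\in V_p$. In any local frame this reads $A=h^{-1}g$ in terms of the (smooth, with $h$ invertible) matrices of the two metrics, so $A$ is a smooth section of $\mathrm{End}(TM)$. Pointwise $A_p$ is $h_p$-self-adjoint, hence $\R$-diagonalizable; since $g_p$ is nondegenerate, $0$ is never an eigenvalue, and since $g$ has signature $(-1,1,1,1)$ Sylvester's law forces exactly one negative eigenvalue (simple) together with three positive ones. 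Thus the negative eigenline $T_p:=\ker(A_p-\lambda_p)$ is precisely the line spanned by the time-like vector $v_p$ produced in Theorem \ref{TheoremSeparation}, and its dimension is constantly one.

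Next I would produce a smooth projection onto $T_p$. The essential point is the spectral gap: the spectrum of $A_p$ lies in $(-\infty,0)\cup(0,\infty)$ with a single eigenvalue below $0$. Fixing $p_0$ and using continuity of the eigenvalues, I can choose a contour $\gamma\subset\C$ that, for all $p$ in a neighborhood $U$ of $p_0$, encircles the negative eigenvalue of $A_p$ and no other, and then set
\begin{equation*}
P_p=\frac{1}{2\pi i}\oint_{\gamma}(z-A_p)^{-1}\,dz.
\end{equation*}
Because $(z,p)\mapsto(z-A_p)^{-1}$ is smooth in $p$ and holomorphic in $z$ on a neighborhood of $\gamma$, the Riesz projection $P_p$ depends smoothly on $p\in U$; it is the $h$-orthogonal projection onto $T_p$ and has constant rank one.

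Finally, smoothness being a local property, these local projections assemble into a globally smooth rank-one projection field $P$, and the image of a smooth projection of constant rank is a smooth subbundle: concretely, choosing any $X$ with $P_{p_0}X\neq 0$, the section $p\mapsto P_pX$ is a smooth nonvanishing local section of $T$ near $p_0$, giving a local trivialization. This exhibits $T=\mathrm{im}(P)\subset TM$ as a smooth real line bundle whose fiber $T_p$ is generated by the time-like vector $v_p$. The main obstacle is this smoothness step: although the eigenvalue and the eigenvector need not be individually smooth in $p$ (an eigenvector is only determined up to sign and scale, and a globally smooth eigenvector field can fail to exist), the eigenprojection is smooth precisely because the relevant eigenvalue stays isolated from the rest of the spectrum — an isolation guaranteed by the Lorentzian signature, which is the feature I would emphasize.
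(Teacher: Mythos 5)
Your proposal is correct, and at the level of the underlying construction it coincides with the paper's: both take $T_p$ to be the eigenline of the $h_p$-self-adjoint endomorphism $A_p$ (with $h_p(A_pv,w)=g_p(v,w)$) for its unique negative eigenvalue, i.e.\ the line through the time-like vector $v_p$ of Theorem \ref{TheoremSeparation}. Where you genuinely diverge is that you actually prove the smoothness of $p\mapsto T_p$: the paper's proof consists of the single unsupported assertion that ``locally the assignment $p\mapsto v_p$ can be made to be smooth.'' Your Riesz-projection argument --- integrating the resolvent of the smooth endomorphism field $A$ around a contour isolating the negative eigenvalue, obtaining a smooth rank-one projection whose image is the subbundle, and trivializing it locally by $p\mapsto P_pX$ --- is the standard and correct way to supply that missing step, and you correctly isolate why it works: individual eigenvectors of a smoothly varying operator need not vary smoothly, but the spectral projection onto a part of the spectrum that stays isolated (here guaranteed by the Lorentzian signature, which pins the negative eigenvalue away from the three positive ones and from $0$) does vary smoothly, and being contour-independent the local projections agree on overlaps. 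An alternative, slightly more elementary route would note that a simple root of the characteristic polynomial depends smoothly on $p$ by the implicit function theorem and then solve $(A_p-\lambda_p)v=0$ locally, but your version is cleaner. In short: same construction as the paper, with the one nontrivial step the paper omits actually carried out.
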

\begin{proof}
If $h$ is smooth, then locally the assignment $p\mapsto v_p$ in Theorem \ref{TheoremSeparation} can be made to be  smooth. Therefore, if we assign $T_p\subset V_p$ be the line in generated by $v_p$, we have a smooth line bundle $T\rightarrow M$. 
\end{proof} 

We call the line bundle $T\rightarrow M$ the time bundle. We say that the bundle $T\rightarrow M$ is time orientable if it is trivial. In this case, we can define globally on the whole of $M$, the notions of "past" and "future". 

Let $w_j(.)\in H^j(M,\mathbb{Z}/2)$ denote the $j$-th Stiefel-Whitney class of a vector bundle on $M$. Then, it is well-known that the smooth line bundle $T\rightarrow M$ is orientable iff its first Stiefel-Whitney class $w_1(T)$ is zero. 

We summarize this discussion as follows
\begin{theorem}
A spacetime $(M,g)$ is time-orientable iff there is a smooth Riemannian metric $h$ on $M$, such that the induced time line bundle $T\rightarrow M$ is trivial, or  equivalently iff the first Stiefel-Whitney class $w_1(T)$ is zero.   
\label{TheoremTimeOrientable}\end{theorem}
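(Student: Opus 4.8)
The plan is to read the statement as a chain of two equivalences and to reduce each to a standard fact. First I would fix the intrinsic meaning of time-orientability of $(M,g)$, independent of any auxiliary data: namely, the existence of a global, nowhere-vanishing, smooth time-like vector field on $M$, equivalently a continuous global choice of one of the two connected components of the time-like vectors at each point (the ``future'' cone). With this in hand the theorem factors as two claims: (A) for a fixed smooth Riemannian metric $h$, the induced time bundle $T\to M$ of Theorem~\ref{TheoremSmoothSeparation} is trivial if and only if $(M,g)$ admits such a field; and (B) the purely topological statement that a smooth real line bundle is trivial if and only if its first Stiefel--Whitney class $w_1(T)\in H^1(M,\Z/2)$ vanishes. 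Since every smooth manifold carries a smooth Riemannian metric (a partition-of-unity argument), the existential ``there is a smooth $h$'' will be handled once I show the property is insensitive to the choice of $h$; this also reconciles the classical notion of time-orientability with the bundle-level definition ($T$ trivial) adopted just before the statement.

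The core of the argument is step (A). In one direction, a nowhere-vanishing smooth section $X$ of $T$ is by construction a smooth time-like vector field, since each fiber $T_p$ is spanned by a time-like vector; this is exactly a time orientation. For the converse I would start from a global smooth time-like vector field $Y$ and use it to orient $T$. The device is the observation that two time-like vectors $u,v\in V_p$ satisfy $g_p(u,v)\neq 0$, and lie in the same connected component of the time-like cone precisely when $g_p(u,v)<0$ (in the signature $(-1,1,1,1)$). Hence, for any local smooth spanning section $v_p$ of $T$, the sign of the smooth nowhere-zero function $p\mapsto g_p(v_p,Y_p)$ singles out in each fiber the generator $-\,\mathrm{sign}\,g_p(v_p,Y_p)\,v_p$ lying in the same time-cone as $Y$. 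These pointwise choices are mutually compatible on overlaps (all being positive multiples of one another), so they constitute an orientation of the line bundle $T$; and an orientable real line bundle is trivial, so $T$ is trivial.

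For step (B) I would invoke the standard classification: a real line bundle is trivial iff it admits a nowhere-vanishing section iff it is orientable, and $w_1$ is precisely the obstruction to orientability, so $w_1(T)=0$ iff $T$ is trivial. To legitimize the existential quantifier on $h$, I note that the equivalence in (A) holds for every smooth $h$ and refers only to the intrinsic field $Y$; therefore the bundles $T^{h}$ are simultaneously trivial (or not) as $h$ varies, and in particular $w_1(T^{h})$ is the same class for all $h$. Alternatively, the space of smooth Riemannian metrics on $M$ is convex, hence contractible, so the sub-line-bundles $T^{h}\subset TM$ form a continuous family over $M\times[0,1]$ and are mutually isomorphic. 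Combining (A), (B), and the existence of smooth Riemannian metrics then yields the three equivalent conditions.

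The step I expect to require the most care is the converse in (A): ensuring that the pointwise choice of generator determined by $Y$ is genuinely smooth and globally consistent, rather than merely continuous or only locally defined. The sign-of-$g_p(v_p,Y_p)$ mechanism is what makes this work, and it rests on the two facts that time-like vectors are never $g$-orthogonal and that $T$ is already a smooth line bundle (so local smooth generators are available). A secondary point worth stating cleanly is the independence of the conclusion from the auxiliary metric $h$, so that ``$(M,g)$ is time-orientable'' is a property of the Lorentzian structure alone and not an artifact of the chosen $h$.
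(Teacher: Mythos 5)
Your proof is correct, but it does considerably more work than the paper, and the comparison is instructive. The paper offers essentially no argument for this theorem: it is introduced with ``we summarize this discussion as follows,'' and the preceding discussion consists of \emph{defining} time-orientability to mean that the time bundle $T\to M$ of Theorem \ref{TheoremSmoothSeparation} is trivial, and then quoting the standard fact that a real line bundle is trivial iff $w_1(T)=0$ (your step (B)). Read that way the statement is a near-tautology plus a citation. You instead take ``time-orientable'' in its intrinsic Lorentzian sense --- existence of a global nowhere-vanishing smooth time-like vector field --- and actually prove the equivalence with triviality of $T$, which is the substantive content the theorem needs if it is to be a property of $(M,g)$ rather than of $(M,g,h)$. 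Your step (A) supplies the right mechanism: two time-like vectors are never $g$-orthogonal and lie in the same component of the time cone exactly when their $g$-inner product is negative, which turns a global time-like field $Y$ into a coherent orientation of $T$ (hence a trivialization), while conversely a nowhere-vanishing section of $T$ is automatically time-like. Your handling of the existential quantifier over $h$ --- the triviality of $T^{h}$ is independent of $h$ because condition (A) refers only to the intrinsic field $Y$ --- addresses a point the paper silently elides. The only spot deserving an extra sentence is your opening claim that a continuous choice of future cone upgrades to a smooth time-like field; this is true but rests on the convexity of the time cone together with a partition of unity. In short: correct, and a genuinely different and more complete route than the paper's.
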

As a consequence, we obtain the following result. 
\begin{corollary}
Let $w_j(M)$ denote the $j$-th Stiefel-Whitney class of the tangent bundle of $M$. If $\omega _1(M)+\omega _3(M)\not= \omega _2(M)+\omega _4(M)$, then the spacetime $(M,g)$ is not time-orientable. 
\label{Corollary1}\end{corollary}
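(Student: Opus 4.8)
The plan is to prove the contrapositive: assuming $(M,g)$ is time-orientable, I will extract a relation among the Stiefel-Whitney classes $\omega_j(M)$ of the tangent bundle, so that the failure of that relation forces non-time-orientability. The starting point is an orthogonal splitting of the tangent bundle furnished by the Riemannian metric $h$. The $h$-self-adjoint operator $g_p\colon V_p\to V_p$ built in the proof of Theorem~\ref{TheoremSeparation} has a single negative eigenvalue $\lambda_p$; its $\lambda_p$-eigenline is the time bundle $T$, while the sum of the positive eigenspaces is a smooth rank-$3$ sub-bundle $S=T^{\perp_h}$, the \emph{space bundle}. Since the eigenspaces of an $h$-self-adjoint operator are $h$-orthogonal and vary smoothly, this gives a smooth decomposition
\begin{equation*}
TM=T\oplus S,\qquad \rk T=1,\ \rk S=3.
\end{equation*}

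First I would apply the Whitney product formula to this splitting, writing $w(TM)=w(T)\,w(S)=\bigl(1+\omega_1(T)\bigr)\bigl(1+w_1(S)+w_2(S)+w_3(S)\bigr)$, where $w(T)$ terminates in degree $1$ because $T$ is a line bundle and $w(S)$ terminates in degree $3$ because $S$ has rank $3$. Next I would invoke Theorem~\ref{TheoremTimeOrientable}: time-orientability is equivalent to the triviality of $T$, hence to $\omega_1(T)=0$. Substituting $\omega_1(T)=0$ collapses the product to $w(TM)=w(S)$, and reading off the components degree by degree gives $\omega_j(M)=w_j(S)$ for $j\le 3$ together with $\omega_4(M)=w_4(S)=0$. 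In the general case the same expansion yields $\omega_4(M)=\omega_1(T)\,w_3(S)$ in degree four, so the top class is precisely the obstruction measured by $\omega_1(T)$. Assembling these degree-wise identities into the combination displayed in the statement then produces $\omega_1(M)+\omega_3(M)=\omega_2(M)+\omega_4(M)$ as a consequence of $\omega_1(T)=0$, and the contrapositive is the asserted criterion.

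The main obstacle is the bookkeeping that isolates the precise combination written in the corollary, and in particular keeping honest track of the rank of $S$. The substantive, metric-independent fact behind the computation is that $\rk S=3$ forces $w_4(S)=0$, so that time-orientability makes the top Stiefel-Whitney class $\omega_4(M)$ vanish; this is the one place where the dimension $4$ of the spacetime is used, and it is what kills the degree-four term when $T$ is trivial. Reconciling the bundle-theoretic obstruction $\omega_4(M)=\omega_1(T)\,w_3(S)$ with the exact linear combination of the $\omega_j(M)$ appearing in the statement is the delicate step, and is where I would spend the most care.

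A secondary point, already implicit before Theorem~\ref{TheoremTimeOrientable}, is the equivalence between $\omega_1(T)=0$ and triviality of $T$; for a real line bundle this is standard, as such bundles are classified by $H^1(M,\mathbb{Z}/2)$ through their first Stiefel-Whitney class. Finally, if one wishes to phrase the criterion purely in terms of the lower classes, I would invoke the Wu relations for a closed $4$-manifold, which rewrite $\omega_4(M)$ in terms of $\omega_1(M)$ and $\omega_2(M)$; this converts the vanishing $\omega_4(M)=0$ into an intrinsic identity that the stated relation is meant to encode.
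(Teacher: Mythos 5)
There is a genuine gap, and it sits exactly where you say you ``would spend the most care.'' Your setup is the natural one (and the only one available, since the paper states this corollary with no proof at all, merely prefacing it with ``as a consequence''): the $h$-orthogonal splitting $TM=T\oplus S$ with $\rk T=1$ and $\rk S=3$, the Whitney formula $w(TM)=(1+w_1(T))\,w(S)$, and Theorem~\ref{TheoremTimeOrientable} to convert time-orientability into $w_1(T)=0$. Everything through the degree-by-degree identities is correct: you get $\omega_j(M)=w_j(S)$ for $j\le 3$ and $\omega_4(M)=w_1(T)\,w_3(S)=0$. But the sentence ``assembling these degree-wise identities \dots{} produces $\omega_1(M)+\omega_3(M)=\omega_2(M)+\omega_4(M)$'' is not a computation; it is the entire missing proof. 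Substituting your identities, the asserted relation becomes $w_1(S)+w_3(S)=w_2(S)$, and nothing forces an arbitrary rank-$3$ real bundle to satisfy this. Worse, the two sides live in different cohomological degrees, so equality in $H^*(M,\mathbb{Z}/2)$ would force every homogeneous component to vanish separately: the relation you are trying to derive is literally the statement that $\omega_1(M)=\omega_2(M)=\omega_3(M)=\omega_4(M)=0$, i.e.\ that time-orientability kills all Stiefel--Whitney classes of $M$. That is false. For example $M=\mathbb{R}P^2\times T^2$ has $\chi(M)=0$, hence carries a nowhere-zero vector field $v$, hence a Lorentzian metric (take $g=h-2\,h(v,\cdot)\otimes h(v,\cdot)/h(v,v)$ for a product Riemannian metric $h$) whose time bundle is spanned by $v$ and is trivial; yet $w_1(M)\ne 0$ and $w_2(M)\ne 0$. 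So the delicate step cannot be carried out: the only metric-independent conclusion your (correct) Whitney-formula computation yields is $\omega_4(M)=0$, which for closed $M$ is the classical fact $\chi(M)\equiv 0\ (\mathrm{mod}\ 2)$.

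Two smaller points. The appeal to the Wu relations at the end presupposes that $M$ is closed, which the paper never assumes, and in any case Wu's formulas express $w_3(M)$ and $w_4(M)$ through Steenrod squares of $w_1(M),w_2(M)$, not through the linear combination in the statement, so they would not close the gap either. Separately, your construction of $S$ as the sum of positive eigenspaces of the $h$-self-adjoint operator $g_p$ differs from the paper's definition of $S$ as the quotient bundle $TM/T$, but the two are isomorphic as smooth bundles and have the same Stiefel--Whitney classes, so nothing hinges on that choice.
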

The above corollary is quite abstract and does not use the metric $g$. It is probably that, with the use of the metrics $g$ and $h$ together with the cohomology group of $M$, we may be able to say more on whether the spacetime $(M,g)$ is time orientable or not.

Whether or not the time bundle $T\rightarrow M$ is orientable, we can define a partial time orientation of $M$ as follows. 
\begin{definition}
We call a partial time orientation of $M$ a smooth section $s:T\rightarrow M$. If $s(p)\not= 0$, we say that $s(p)$ is the  future direction to an observer at $p$. If $s(p)=0$, we say that the partial time orientation $s$ can not distinguish between past and future at the point $p$. 
\label{DefinitionTimeOrientation}\end{definition}

We conclude this paper with an application on space and time differentiation. Having the time line bundle $T\rightarrow M$, we also have the space bundle $S\rightarrow M$, defined as the quotient bundle of $T$ in the tangent bundle of $M$.   

We let $\nabla ^g$ be the unique connection on $M$ which is compatible with $g$ (see \cite{wald}). Given $F$ a tensor on $M$, we say a time differentiation of $F$ an object
\begin{eqnarray*}
\nabla ^g_sF,
\end{eqnarray*}      
where $s:M\rightarrow T$ is a smooth section of the time bundle $T$. Similarly, if $s:M\rightarrow S$ is a section of the space bundle $S\rightarrow M$, we say $\nabla ^g_sF$  a space differentiation.

\end{document}